\newcommand{\e}{\mathrm{e}}
\newtheorem*{mainresult}{Main Result}
\newtheorem{lemma}{Lemma}
\theoremstyle{definition}
\theoremstyle{remark}
\newtheorem{remark}{Remark}
\newcommand{\E}{\mathsf{E}}
\newcommand{\R}{\mathbb{R}}
\newcommand{\vm}[1]{\boldsymbol{#1}}
\newcommand{\trans}{\mathsf{T}}
\newcommand{\im}{\mathrm{i}}
\newcommand{\M}{M}
\newcommand{\N}{N}
\newcommand{\K}{K}
\newcommand{\NR}{u}
\newcommand{\nr}{a}
\newcommand{\nrb}{b}
\newcommand{\nrbra}{[\nr]}
\newcommand{\nrbbra}{[\nrb]}
\newcommand{\Qhat}{\hat{Q}}
\newcommand{\mhat}{\hat{m}}
\newcommand{\chihat}{\hat{\chi}}
\begin{document}

\title{Analysis of Sparse Representations Using
  Bi-Orthogonal Dictionaries}

\author{\IEEEauthorblockN{Mikko Vehkaper{\"a}$^{1,3}$, Yoshiyuki Kabashima$^{2}$,
Saikat Chatterjee$^{1}$, Erik Aurell$^{1,3}$, \\ Mikael Skoglund$^{1}$ and Lars Rasmussen$^{1}$} 
\vspace*{0ex}
\IEEEauthorblockA{$^{1}$%
KTH Royal Institute of Technology and the ACCESS Linnaeus Center, 
SE-100 44, Stockholm, Sweden}
\IEEEauthorblockA{$^{2}$%
Tokyo Institute of Technology, Yokohama, 226-8502, Japan
}%
\IEEEauthorblockA{$^{3}$%
Aalto University, P.O. Box 11000, FI-00076 AALTO, Finland
}%
E-mails: \url{{mikkov, sach, eaurell, skoglund, lkra}@kth.se},
\url{kaba@dis.titech.ac.jp}
\vspace*{-0.5cm}
}

\maketitle

\begin{abstract}
The sparse representation problem of
recovering an $\N$ dimensional sparse vector $\vm{x}$
from $\M < \N$ linear observations $\vm{y} = \vm{D} \vm{x}$ given
dictionary $\vm{D}$ is considered.  
The standard approach is to let the elements of the dictionary
be independent and identically distributed (IID)
zero-mean Gaussian and minimize the 
$l_{1}$-norm of $\vm{x}$ under the constraint $\vm{y} = \vm{D} \vm{x}$.
In this paper, the  performance of $l_{1}$-reconstruction
is analyzed, when the dictionary is
bi-orthogonal $\vm{D} = [\vm{O}_{1} \; \vm{O}_{2}]$, where
$\vm{O}_{1}, \vm{O}_{2}$
are independent and drawn uniformly according to the 
Haar measure on the group of orthogonal $\M\times\M$ matrices.
By an application of the replica method, we obtain the 
critical conditions under which perfect $l_{1}$-recovery
is possible with bi-orthogonal dictionaries.
\end{abstract}

\section{Introduction}

The sparse representation (SR)
problem has wide applicability,
for example, in communications
\cite{Fletcher_ISIT_2009, 
Gastpar_Sastry_2010_Distributed_sensor_perception},
multimedia \cite{Elad_book_2010}, 
and compressive sampling (CS)
\cite{Donoho_2006_Compressed_sensing,
CS_introduction_Candes_Wakin_2008}.
The standard SR problem is to find 
the sparsest $\vm{x} \in \mathbb{R}^{\N}$ that is the solution to
the set of $\M < \N$ linear equations
\begin{equation}
\vm{y} = \vm{D} \vm{x},
\label{eq:Sparse_Representation_without_Noise}
\end{equation}
for a given dictionary or sensing matrix $\vm{D} \in \mathbb{R}^{\M \times \N}$ 
and observation $\vm{y}$.  Finding such $\vm{x}$ is, however, 
non-polynomial (NP) hard.  Thus, a variety of practical
algorithms have been developed that solve the SR problem sub-optimally. 
The topic of the current paper is the convex relaxation 
approach where, instead of searching for the $\vm{x}$ having
the minimum $l_{0}$-norm,
the goal is to find the minimum $l_{1}$-norm solution of
\eqref{eq:Sparse_Representation_without_Noise}.

Let $\K$ be the number of non-zero elements in $\vm{x}$ and
assume that the convex relaxation method is used for recovery.
The trade-off between two parameters  
$\rho = \K / \N$ and $\alpha = \M / \N$ is then of special interest 
since it tells how much the sparse signal can 
be compressed under $l_{1}$-reconstruction.  
An interesting question then arises: How does 
the sparsity-undersampling ($\rho$~vs.~$\alpha$)
trade-off depend on the choice of dictionary $\vm{D}$? 

The empirical study in 
\cite[Sec.~15 in SI]{Donoho-Maleki-Montanari-2009} 
gave evidence that the worst case $\rho$ vs. $\alpha$
trade-off is quite universal w.r.t different random matrix 
ensembles. 
Analysis in \cite{Kabashima-Wadayama-Tanaka-2009} further revealed
that the typical conditions for perfect $l_{1}$-recovery
are the same for all sensing matrices that are sampled
from the rotationally invariant matrix ensembles.  
Dictionaries with independent identically 
distributed (IID) zero-mean Gaussian elements is one example of this.
But correlations in $\vm{D}$ \emph{can} 
degrade the performance of $l_{1}$-recovery
\cite{Takeda-Kabashima-isit2010}, so 
it is not fully clear how the choice of $\vm{D}$
affects the  $\rho$ vs. $\alpha$ trade-off.

Besides the random / unstructured dictionaries mentioned above, 
the information theoretic approach in \cite{Tulino-etal-isit2011} 
encompasses more general matrix ensembles but does not
consider the $l_1$-reconstruction limit.
Several studies in the literature have also considered the specific 
construction where $\vm{D}$ is formed by concatenating two 
orthogonal matrices \cite{Huo_Theis_1999, 
Donoho_Huo_2001, Elad_Bruckstein_2002, Donoho_Elad_2003, 
Rubinstein_Elad_Dictionary_ProcIEEE_2010}.
Such bi-orthogonal dictionaries are easy to implement and can 
give elegant theoretical insights. 
Unfortunately, the ``mutual coherence'' based 
methods used in these papers provide 
pessimistic, or worst case, thresholds. 
Furthermore, the result are not easy to compare between 
the unstructured and bi-orthogonal
cases.

We consider the analysis of 
the \emph{bi-orthogonal SR setup}
\begin{equation}
\vm{y} = \vm{D} \vm{x} =
\begin{bmatrix}
\vm{O}_{1} & \vm{O}_{2}
\end{bmatrix}
\begin{bmatrix}
\vm{x}_{1} \\
\vm{x}_{2}
\end{bmatrix}
= \vm{O}_{1} \vm{x}_{1}
+ \vm{O}_{2} \vm{x}_{2},
\label{eq:sysmodel_w_A_2}
\end{equation}
where the dictionary is constructed by concatenating two independent matrices 
$\vm{O}_{1}$ and $\vm{O}_{2}$, that are
drawn uniformly according to the Haar measure on the group
of all orthogonal $\M\times\M$ matrices.
We use the \emph{non-rigorous replica method}
(see, e.g., \cite{Kabashima-Wadayama-Tanaka-2009,Tanaka_Raymond_2010,
Guo-Baron-Shamai-2009, Rangan-Fletcher-Goyal-2012} for related works)
to assess $\rho$ for a given $\alpha$, up to which the
$l_{1}$-recovery is successful. This allows a direct comparison 
between the random and bi-orthogonal dictionaries in
average or typical sense. The main result of the paper is the
\emph{sparsity-undersampling trade-off
for the bi-orthogonal SR setup \eqref{eq:sysmodel_w_A_2}}.
We find that this matches the unstructured 
IID Gaussian dictionary when the non-zero components
are uniformly distributed between the two blocks.
Surprisingly, when the non-zero components are concentrated 
more on one block than the other, the \emph{bi-orthogonal dictionaries
can cope with higher overall densities} than the unstructured case.
This extends to the case of general $T$-concatenated 
orthogonal dictionaries as reported elsewhere 
\cite{Lortho-physics}.

\section{Problem Setting}
\label{sec:problem}

Consider the SR problem of finding the sparsest vector
$\vm{x} = [\vm{x}^{\trans}_{1} \; \vm{x}^{\trans}_{2}]^{\trans}\in\R^{\N}$,
given the dense vector $\vm{y}\in\R^{\M}$ and the dictionary
$\vm{D} = [\vm{O}_{1} \; \vm{O}_{2}]\in\R^{\M\times \N}$.
By definition
$\M/\N =  1/2$ and
$\vm{O}_{i}^{\trans} \vm{O}_{i} =\vm{I}_{\M}$ for this setup.
Let $\K_{1}$ and $\K_{2}$ be the number of non-zero elements in $\vm{x}_{1}$ 
and $\vm{x}_{2}$, respectively, so that  $\K = \K_{1} + \K_{2}$ is the 
total number of non-zero elements in $\vm{x}$.  
Denote $\rho = \K / (2\M)$ for the overall sparsity
of the source while $\rho_{1} = \K_{1} / \M$ and 
$\rho_{2} = \K_{2} / \M$ represent the signal densities
of the two blocks.  

It is important to note that $\vm{D}$ in \eqref{eq:sysmodel_w_A_2}
does not belong to the rotationally invariant matrix ensembles
\cite{Kabashima-Wadayama-Tanaka-2009}, and there are complex
dependencies between the elements due to the orthogonality constraints.
The fact that $\vm{O}_{1}^{\trans} \vm{O}_{2} \neq \vm{0}$ 
makes the analysis of the setup highly non-trivial 
(for a sketch, see Appendices~\ref{app:free-energy}~and~\ref{app:mtx-integrals}). 
Thus, only the bi-orthogonal case is considered here and
the analysis of general $T$-concatenated orthogonal 
dictionaries is reported elsewhere 
\cite{Lortho-physics}.

The system is assumed to approach
the large system limit 
$\M,\K_{1},\K_{2}\to\infty$ where the signal densities
$\rho_{1}, \rho_{2}$ 
are finite and fixed.  
We let $\{\vm{x}_{i}\}_{i=1}^{2}$ be independent
sparse random vectors whose components are IID according to
\begin{equation}
\vspace*{-0.15ex}
p_{i}(x) = (1-\rho_{i}) \delta(x) +
\rho_{i} \e^{-x^{2}/2}/\sqrt{2\pi},
\quad i = 1,2.
\label{eq:true_sourcesym_pdf_k}
\vspace*{-0.15ex}
\end{equation}
The convex relaxation of the original problem is considered
and the goal is to find $\vm{x} = [\vm{x}^{\trans}_{1} \; \vm{x}^{\trans}_{2}]^{\trans}$ 
that is the solution to 
\begin{equation}
\min_{\vm{x}_{1}, \vm{x}_{2}}\;
\| \vm{x}_{1} \|_{1} + \| \vm{x}_{2} \|_{1}
\!\quad \text{s.t.} \!\quad \vm{y} = \vm{O}_{1} \vm{x}_{1} + \vm{O}_{2} \vm{x}_{2}.
\label{eq:l1_opt_problem}
\end{equation}
Note that we do not consider the weighted $l_{1}$-reconstruction
analyzed for the rotationally invariant $\vm{D}$ in \cite{Tanaka_Raymond_2010}.  
This corresponds to
the scenario where the user has no prior knowledge about
the relative statistics of the data blocks.
In the next section we find the typical
density $\rho = (\rho_{1} + \rho_{2})/2$ 
for which perfect $l_{1}$-reconstruction is possible under 
the constraint \eqref{eq:sysmodel_w_A_2}.

\section{Analysis}

Let the postulated prior of the sparse vector $\vm{x}_{i}$ be
\begin{equation}
q_{\beta} (\tilde{\vm{x}}_{i})
= \e^{-\beta\|\tilde{\vm{x}}_{i}\|_{1}}, \quad i =1,2,
\label{eq:post_pdf_xi}
\end{equation}
where the components of
$\tilde{\vm{x}}_{i}\in\R^{\M}$ are IID. 
The inverse temperature
$\beta$ is a non-negative parameter. 
Let $q_{\beta} (\tilde{\vm{x}})
= q_{\beta} (\tilde{\vm{x}}_{1})
q_{\beta} (\tilde{\vm{x}}_{2})$
be the postulated prior of $\vm{x}$ in \eqref{eq:sysmodel_w_A_2},
and define a mismatched posterior mean estimator 
\begin{equation}
\label{eq:pme_invtemp_beta}
\left\langle \tilde{\vm{x}} \right \rangle_{\beta}
= 
 Z_{\beta}(\vm{y},\vm{D})^{-1}
\int
\tilde{\vm{x}}\delta(\vm{y} - \vm{D} \tilde{\vm{x}})
q_{\beta} (\tilde{\vm{x}}) 
\mathrm{d} \tilde{\vm{x}}.
\end{equation}
Here
$Z_{\beta}(\vm{y},\vm{D})
= \int\delta(\vm{y} - \vm{D} \tilde{\vm{x}})
q_{\beta} (\tilde{\vm{x}})
\mathrm{d} \tilde{\vm{x}},$
acts as the partition function of the system.
Then, the zero-temperature estimate
$\langle \tilde{\vm{x}} \rangle_{\beta\to\infty}$
is a solution (if at least one exists) to the 
original $l_{1}$-minimization problem  \eqref{eq:l1_opt_problem}. 

Utilizing of one of the standard tools from statistical
physics, namely the non-rigorous \emph{replica method}, 
we study next the behavior of the
estimator \eqref{eq:pme_invtemp_beta}. We accomplish this by
examining the so-called \emph{free energy density $f$} of the system
in the thermodynamic limit $\N\to\infty$.  As a corollary, 
we obtain the critical 
compression threshold for the original 
optimization problem \eqref{eq:l1_opt_problem}
when $\beta\to\infty$.

\subsection{Free Energy}
\label{sec:freeE}

As sketched in Appendix~\ref{app:free-energy},
the free energy density related to
\eqref{eq:pme_invtemp_beta} reads under the replica
symmetric (RS) ansatz
\begin{IEEEeqnarray}{rCl}
f_{\mathsf{rs}} &=&
- \frac{1}{2} \lim_{\beta\to\infty} \frac{1}{\beta} \lim_{\M\to\infty}
\frac{1}{\M} \lim_{\NR\to 0} \frac{\partial}{\partial \NR}
\log \E_{\vm{y},\vm{D}} \{Z^{\NR}_{\beta}(\vm{y},\vm{D})\} \IEEEnonumber\\
  &=&  \frac{1}{2}\,
  \underset{\{\Theta_{1},\Theta_{2}\}}{\mathrm{cextr}}\,
    \sum_{i=1}^{2} T(\Theta_{i}),
  \label{eq:freeE-rmrs-Gauss_main}
\end{IEEEeqnarray}
where
\begin{IEEEeqnarray}{l}
  \label{eq:optfuncT_main}
  T(\Theta_{i}) =
  \frac{\rho_{i} - 2 m_{i} + Q_{i}}{4\chi_{i}} - \frac{Q_{i} \Qhat_{i}}{2}
  + \frac{\chi_{i} \chihat_{i}}{2} + m_{i} \mhat_{i}
  \IEEEeqnarraynumspace\IEEEnonumber\\
   + \!\int\!  (1-\rho_{i})
  \phi (z \sqrt{\chihat_{i}};\,\Qhat_{i})
  + \rho_{i} \phi (z \sqrt{\mhat_{i}^{2}+\chihat_{i}} ;\,\Qhat_{i})
\mathrm{D} z, \IEEEeqnarraynumspace%
\end{IEEEeqnarray}
$\Theta_{i} = \{Q_{i}, \chi_{i}, m_{i}, \Qhat_{i}, \chihat_{i}, \mhat_{i}\}$
is a set of parameters that take values on the extended real line,
$\mathrm{D} z = (2\pi)^{-1/2}\e^{-z^{2}/2} \mathrm{d} z$ is the
Gaussian measure and
\begin{equation}
\phi (h;\,\Qhat) = \min_{x\in\R} \big\{
\Qhat x / 2 - h x + |x| \big\}.
\label{eq:phi-func}
\end{equation}
In contrast to, e.g., 
\cite{Kabashima-Wadayama-Tanaka-2009,Tanaka_Raymond_2010},
here $\mathrm{cextr}_{\Theta} \, g(\Theta)$
is \emph{constrained} extremization over the function $g(\Theta)$
\emph{when $\chi_{1} = \chi_{2},$ needs to be satisfied}.

\begin{remark}
If the dictionary is sampled from 
the rotationally invariant matrix ensembles, the 
RS free energy density reads
\begin{IEEEeqnarray}{l}
f_{\rm rs}\!=\!\frac{1}{2}\!\mathop{\rm extr}_{\{\Theta_1,\Theta_2\}}
\sum_{i=1}^2
\Bigg(\!
\frac{ \rho_i-2m_i+Q_i}{
2\sum_{i=1}^2 \chi_i } 
\!-\frac{Q_i\hat{Q}_i}{2}\!+\frac{\chi_i\hat{\chi}_i}{2}
\!+m_i \hat{m}_i  \IEEEnonumber\\
 + 
\int 
(1-\rho_i)\phi(z\sqrt{\hat{\chi}_i};\hat{Q}_i)
+ \rho_i \phi(z\sqrt{\hat{m}_i^2+\hat{\chi}_i};\hat{Q}_i) 
{\rm D} z
\Bigg),
\end{IEEEeqnarray}
where $\mathrm{extr}$ is an \emph{unconstrained} extremization w.r.t 
$\{\Theta_{1},\Theta_{2}\}$.
\end{remark}

\subsection{Constrained Extremization}

Let us denote $Q(x)= \int_{x}^{\infty}\mathrm{D} z$
for the Q-function
and define 
\begin{equation}
r(h)=
    \sqrt{\frac{h}{2 \pi}} 
    \e^{-\frac{1}{2 h}}
    -(1+h) Q
    \bigg(\frac{1}{\sqrt{h}} \bigg).
\end{equation}
After solving the integrals and 
the optimization problem in \eqref{eq:phi-func},
the function \eqref{eq:optfuncT_main} becomes
\begin{IEEEeqnarray}{rCl}
T(\Theta_{i}) &=&
    \frac{\rho_{i} - 2 m_{i} + Q_{i}}{4\chi_{i}}
      - \frac{Q_{i} \Qhat_{i}}{2}
      + \frac{\chi_{i} \chihat_{i}}{2}
      + m_{i} \mhat_{i}  \IEEEeqnarraynumspace\IEEEnonumber\\
      && +\, \frac{1-\rho_{i}}{\Qhat_{i}}
	r(\chihat_{i})
	+ 
	\frac{\rho_{i}}{\Qhat_{i}}r(\mhat_{i}^{2}+\chihat_{i}).
\IEEEeqnarraynumspace
  \label{eq:optfuncT_lagranges1_main}
\end{IEEEeqnarray}
Introducing the Lagrange multiplier $\eta$ for the constraint
$\chi_{1} = \chi_{2},$
an alternative formulation for the free energy density reads
\begin{IEEEeqnarray}{rCl}
 f_{\mathsf{rs}} &=& %
  \frac{1}{2}\underset{\{\Theta_{1},\Theta_{2}, \eta\}}
  {\mathrm{extr}}\,
  \big\{\eta(\chi_{1}	- \chi_{2})
      +T(\Theta_{1}) + T(\Theta_{2}) \big\},
      \IEEEeqnarraynumspace
\label{eq:FreeEd_wLagrange_main}
\end{IEEEeqnarray}
where the extremization is now an unconstrained problem.
Taking partial derivatives w.r.t all optimization variables and
setting the results to zero yields the identities
\begin{IEEEeqnarray}{lCl}
\Qhat_{i} = \mhat_{i}
\quad &\text{ and }& \quad
\chi_{i} = \frac{1}{2\mhat_{i}}, \quad i = 1,2.
\label{eq:DFm_main}
\end{IEEEeqnarray}
We also find that the expressions
\begin{IEEEeqnarray}{rCl}
  \frac{1}{\mhat_{i}}
  &=&
  \frac{2}{\mhat_{i}} \bigg[
  2(1-\rho_{i})
  Q \bigg(\frac{1}{\sqrt{\chihat_{i}}}\bigg)
  + 2\rho_{i}
  Q \bigg(\frac{1}{
  \sqrt{\smash[b]{\mhat_{i}^{2}}+\chihat_{i}}} \bigg)
  \bigg],	
  \label{eq:DFchik1_main}  \IEEEeqnarraynumspace\\
\chihat_{i} &=& 
\frac{\rho_{i} - 2 m_{i} + Q_{i}}{2\chi_{i}^{2}} 
- \eta \frac{\partial}{\partial \chi_{i}}(\chi_{1}-\chi_{2}),
  \IEEEeqnarraynumspace
  \label{eq:DFchihatk_main}
\end{IEEEeqnarray}
are satisfied by the extremum of
\eqref{eq:FreeEd_wLagrange_main}.
Under perfect reconstruction in  mean square error (MSE) 
sense (see, e.g., \cite{Kabashima-Wadayama-Tanaka-2009,Tanaka_Raymond_2010}
for details), we have $\rho_{i} = Q_{i} = m_{i}$
and $\mhat_{i}\to\infty \implies \chi_{i}\to 0$.
Hence, \eqref{eq:DFchik1_main} simplifies to the condition
\begin{equation}
	2(1-\rho_{i})
  Q \bigg(\frac{1}{\sqrt{\chihat_{i}}}\bigg)
  + \rho_{i}  = \frac{1}{2}.
  \label{eq:DFchik3_main}
\end{equation}
On the other hand, omitting the terms of the order $O(1/\mhat^{3})$, we have
from the partial derivatives of $\Qhat_{i}$ 
and $\mhat_{i}$ 
\begin{align}
 Q_{i}
    &= \rho_{i} 
    - \frac{2\rho_{i}}{\mhat_{i}\sqrt{2 \pi}}
    - 
    \frac{2(1-\rho_{i})}{\mhat_{i}^{2}}
    r(\chihat_{i})
+ \frac{\rho_{i}}{\mhat_{i}^{2}}
    (1+\chihat_{i}),
  \label{eq:DFQk1_main} \\ 
 m_{i}
    &=
    \rho_{i} - \frac{\rho_{i}}{\mhat_{i}\sqrt{2 \pi}},
    \IEEEeqnarraynumspace
  \label{eq:DFmk1_main}
  \end{align}
respectively, where we used \eqref{eq:DFm_main} to simplify the expressions.
Plugging the above to 
\eqref{eq:DFchihatk_main}
and using again \eqref{eq:DFm_main} yields
\begin{IEEEeqnarray}{l}
\chihat_{i} = (-1)^{i} \eta 
+ 2\rho_{i} (1+\chihat_{i}) -  4(1-\rho_{i}) r(\chihat_{i}).
    \IEEEeqnarraynumspace
    \label{eq:DFchihatk_main2}
\end{IEEEeqnarray}

Before stating the final result, 
let us introduce a real parameter $\mu \in [0,1]$ and 
assume without loss of generality that $\rho_{1} = \mu \rho_{2}$.
Then the per-block densities can be written as
\begin{equation}
\rho_{1} = \frac{2 \mu}{1+\mu}\rho
\quad \text{and} \quad
\rho_{2} = \frac{2}{1+\mu}\rho,
\end{equation}
where $\rho = \rho(\mu)$ is the overall density of the source.
The parameter $\mu$ determines thus
how uniformly the non-zero components are distributed between 
the two blocks: $\mu=1$ means fully uniformly, 
$\mu=0$ implies that all non-zero
components are in the second block.

\begin{mainresult}
Let $\vm{x}\in\R^{2 \M}$,
$\vm{D} \in\R^{\M \times 2 \M}$ and
$\vm{y} = \vm{D} \vm{x}$ as in \eqref{eq:sysmodel_w_A_2}.
Given the parameter $\mu\in [0,1]$, the typical 
density $\rho(\mu)$ of the solution to
the optimization problem
\begin{equation*}
\underset{\vm{x}=[\vm{x}_{1} \; \vm{x}_{2}]^{\trans}\in\R^{2\M}}{\arg \min} \;
\| \vm{x}_{1} \|_{1} + \| \vm{x}_{2} \|_{1}
\quad \text{s.t.} \quad \vm{y} = \vm{D} \vm{x},
\end{equation*}
is determined in the large system limit
by the solutions of the following set of coupled equations
\begin{IEEEeqnarray}{rCl}
\chihat_{1} &=&
\bigg[Q^{-1}\bigg( 
\frac{1}{4} - 
\frac{2 \mu \rho}{1+\mu}
\bigg[
\frac{1}{2} - Q\bigg(\frac{1}{\sqrt{\chihat_{1}}} \bigg)
\bigg] \bigg)\bigg]^{-2},  \IEEEeqnarraynumspace\\
\eta &=& 
\frac{4 \mu \rho}{1+\mu}
\big[1+\chihat_{1}
+ 2 r(\chihat_{1})
\big] 
- 4 r(\chihat_{1})
-\chihat_{1},
\IEEEeqnarraynumspace \\
\chihat_{2} &=& 
\frac{4 \rho}{1+\mu}
\big[1+\chihat_{2} + 2 r(\chihat_{2})\big]
 - 4 r(\chihat_{2}) + \eta, 
\IEEEeqnarraynumspace \\
\rho
   &=& (1+\mu)\bigg[\frac{1}{2}- 2 Q \bigg(\frac{1}{\sqrt{\chihat_{2}}}\bigg)\bigg]
   \bigg/
   	\bigg[2-4
       Q \bigg(\frac{1}{\sqrt{\chihat_{2}}}\bigg)\bigg],
       \IEEEeqnarraynumspace
\end{IEEEeqnarray}
where $Q^{-1}$ is the functional inverse of the Q-function.
For uniform sparsity, that is, $\mu = 1$ and $\rho_{1} = \rho_{2}$, 
we have $\eta = 0$,
$\chihat_{1} = \chihat_{2}$ and $\chi_{1} = \chi_{2}$ always.
The critical density is thus the same as for the dictionary that is drawn 
from the ensemble of rotationally invariant matrices.
\end{mainresult}

\subsection{Numerical Examples}

Given the dictionary $\vm{D}$ is drawn 
from the ensemble of rotationally invariant matrices,
the critical density for $l_{1}$-recovery is 
known to be independent of the block densities 
$\{\rho_{1},\rho_{2}\}$ and given by
$\rho = 0.19284483309074016\!\ldots$
for all $\mu \in [0,1]$.  For the bi-orthogonal $\vm{D}$, 
the threshold is the same only for the case of uniform sparsity 
$\mu = 1$.  For general $\mu$ we obtain  different thresholds,
as plotted in Fig.~\ref{fig:plot1}.  Note that $\rho(\mu)$ is 
a decreasing function of $\mu$, implying that the more 
concentrated the non-zero components are in one block, the bigger the 
benefit of using the bi-orthogonal dictionary.
We also carried out
numerical simulations for the IID Gaussian and
bi-orthogonal $\vm{D}$ using 'linprog' from Matlab Optimization
Toolbox.  The results are plotted in Fig.~\ref{fig:plot2},
where for each value of $N = 16, 18, \ldots, 50$, there are $10^{6}$
realizations of the SR problem.
Cubic curves are fitted to the data using
nonlinear least-squares regression. 
The critical density for the bi-orthogonal case is 
predicted by the replica method to be
$\rho(0) = 0.22666551758496698\!\ldots$ and we observe 
that the simulations match the analysis up to the third decimal
place.

\begin{figure}[t]
\centering
\includegraphics[width=0.9\columnwidth]{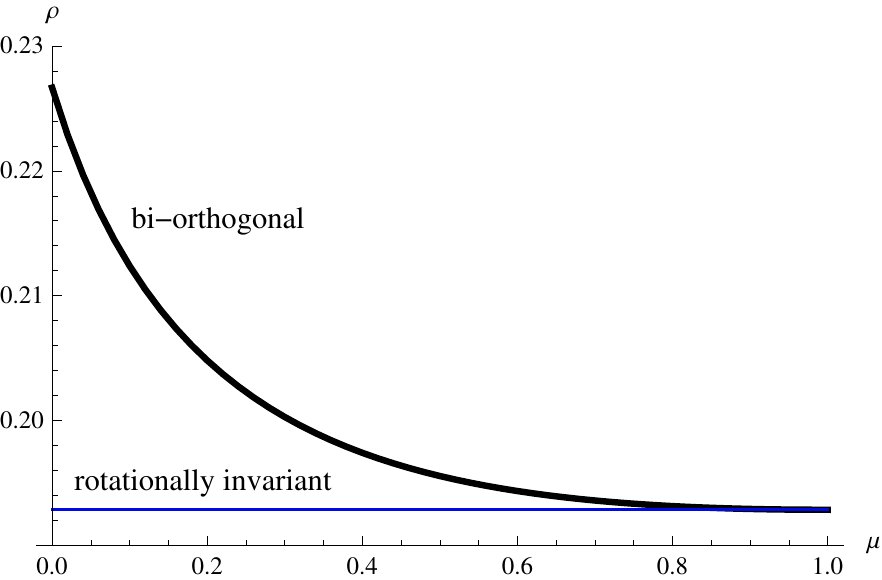}
\vspace*{-1.5ex}
\caption{Critical density for bi-orthogonal and rotationally 
invariant $\vm{D}$.  The parameter $\mu\in[0,1]$ determines
how uniformly the non-zero components are distributed between 
the two blocks ($\mu=1$ fully uniform, $\mu=0$ all non-zero
components are in the second block).  The user has no knowledge 
about $\mu$.}
\label{fig:plot1}
\end{figure}

\begin{figure}[t]
\centering
\includegraphics[width=0.95\columnwidth]{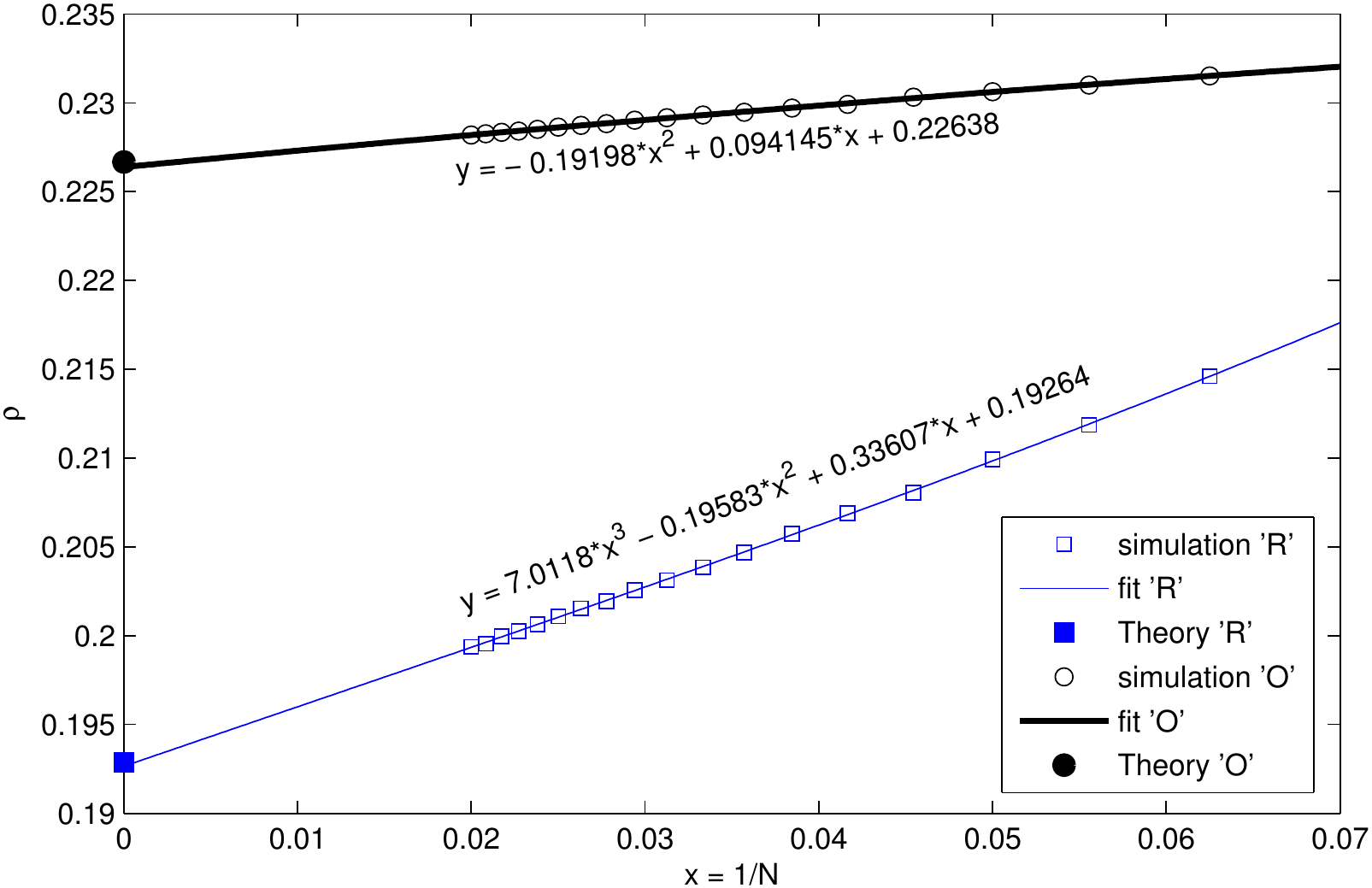}
\vspace*{-1.5ex}
\caption{Critical density given $\mu = 0$, that is, $\rho_{1} = 0, \rho_{2} = 2 \rho$
for finite sized systems.  Here 'R' means
rotationally invariant $\vm{D}$ and 'O' the bi-orthogonal case.  Each point
is averaged over $10^{6}$ realizations of the optimization problem.
The filled markers at $x = 0$ are the predictions given by the replica analysis.}
\label{fig:plot2}
\end{figure}

\section{Conclusions and Discussion}

The  sparsity-undersampling trade-off
for the bi-orthogonal SR setup \eqref{eq:sysmodel_w_A_2} was 
studied.  For uniformly distributed non-zero components,
there is no difference in compression ratio if we replace
the rotationally invariant dictionary $\vm{D}\in\R^{\M \times 2\M}$
by a concatenated matrix
$\vm{D} = [\vm{O}_{1} \;
\vm{O}_{2}]\in\R^{\M\times 2\M}$, where
$\vm{O}_{1}, \vm{O}_{2}$ are independent and drawn
uniformly according to the Haar measure on the  group
of all orthogonal $\M\times\M$ matrices.  
For non-uniform block sparsities, however, the bi-orthogonal
dictionaries were found to be beneficial compared to the 
unstructured random dictionaries.

\appendices

\section{Free Energy}
\label{app:free-energy}

Following \cite{Kabashima-Wadayama-Tanaka-2009,
Tanaka_Raymond_2010}, we use the replica trick and write the 
free energy density as
\begin{equation}
f = - \frac{1}{2}\lim_{\beta\to\infty}
\frac{1}{\beta} \lim_{\NR\to 0} \frac{\partial}{\partial \NR}
\lim_{\M\to\infty} \frac{1}{\M} \log  \Xi^{(\NR)}_{\beta,\M},
\label{eq:FreeEd3}
\end{equation}
where denoting 
$\Delta\vm{x}_{i}^{\nrbra} = \vm{x}^{[0]}_{i} - \vm{x}^{\nrbra}_{i}, \nr = 0,1,\ldots,\NR$,
\begin{equation}
\Xi^{(\NR)}_{\beta,\M} 
=
\E  \lim_{\tau\to 0^{+}}  \!
\frac{1}{\tau^{\frac{\NR\M}{2}}}
\E
\Bigg\{
\e^{-\frac{1}{2\tau} \sum_{\nr=1}^{\NR} \|
\vm{O}_{1} \Delta\vm{x}_{1}^{\nrbra}
+ \vm{O}_{2} \Delta\vm{x}_{2}^{\nrbra}\|^{2}} 
\, \Bigg| \, \mathcal{X} \Bigg\}.
\label{eq:Axi1_3}
\end{equation}
For $i = 1,2$, the vectors
$\{\vm{x}_{i}^{\nrbra}\}_{\nr=1}^{\NR}$ 
 are IID conditioned on $\vm{D}$ and have the same density
 \eqref{eq:post_pdf_xi} as $\tilde{\vm{x}}_{i}$.
Furthermore, the elements of the vectors $\vm{x}^{[0]}_{1}$ and
$\vm{x}^{[0]}_{2}$ are independently drawn according to
  $p_{1}$ and $p_{2}$ as given in
  \eqref{eq:true_sourcesym_pdf_k}, and $\mathcal{X} =
\{\vm{x}^{\nrbra}_{1},\vm{x}^{\nrbra}_{2}\}_{\nr=0}^{\NR}$.

Let us concentrate on $\Xi^{(\NR)}_{\beta,\M}$ and the inner expectation in
\eqref{eq:Axi1_3}, which is over the orthogonal matrices
$\vm{O}_{1}$ and $\vm{O}_{2}$ given $\mathcal{X}$.
Since $\vm{O}_{i}$ are orthogonal, the average 
affects only the cross-terms of the form
$(\vm{u}^{\nrbra}_{1})^{\trans}\vm{u}^{\nrbra}_{2}$
where $\vm{u}^{\nrbra}_{i} = \vm{O}_{i}\Delta\vm{x}_{i}^{\nrbra}$.
Define matrices $\vm{S}_{i} \in \R^{\NR \times \NR}$ for $i = 1,2,$
whose $(\nr,\nrb)$th element
\begin{IEEEeqnarray}{rCl}
S_{i}^{[\nr,\nrb]} 
&=& Q_{i}^{[0,0]}
- Q_{i}^{[0,\nrb]}
- Q_{i}^{[\nr,0]}
+ Q_{i}^{[\nr,\nrb]}, \quad i= 1,2 
\IEEEeqnarraynumspace
\label{eq:symbolSab}
\end{IEEEeqnarray}
is the empirical covariance between the elements of 
$\Delta\vm{x}_{i}^{\nrbra}$ and 
$\Delta\vm{x}_{i}^{\nrbbra}$, written in terms of 
the empirical covariances
\begin{equation}
\label{eq:symbolQab}
Q_{i}^{[\nr,\nrb]} =
\M^{-1}(\vm{x}_{i}^{\nrbra})^{\trans}\vm{x}_{i}^{\nrbbra},
\quad \nr,\nrb=0,1,\ldots,\NR.
\end{equation}
between the components of the $\nr$th and
$\nrb$th replicas of $\vm{x}_{i}$.
For analytical tractability, we make 
the standard replica symmetry (RS)
assumption on the correlations \eqref{eq:symbolQab}, i.e.,
$r_{i} = Q_{i}^{[0,0]}$, 
$m_{i} = Q_{i}^{[0,\nrb]} = Q_{i}^{[\nr,0]} \, \forall a,b\geq 1$,
$Q_{i} = Q_{i}^{[\nr,\nr]}\, \forall a\geq 1$ and
$q_{i} = Q_{i}^{[\nr,\nrb]}\, \forall a \neq b\geq 1$.
The RS free energy density is denoted $f_{\mathsf{rs}}$ and
we remark that it does not match
$f$ if the system is replica symmetry breaking.
Under the RS assumption,
\begin{equation}
\label{eq:Smtx_i_RS}
\vm{S}_{i} 
= S_{i}^{[1,2]}\vm{1}_{\NR}\vm{1}_{\NR}^{\trans}
  + (S_{i}^{[1,1]}-S_{i}^{[1,2]})\vm{I}_{\NR},
  \qquad i=1,2,
  \end{equation}
where $\vm{1}_{\NR}\in\R^{\NR}$ is the vector of all-ones, and
we may write the inner expectation in \eqref{eq:Axi1_3} as
\begin{equation}
\e^{-\frac{\NR\M}{2\tau}( S_{1}^{[1,1]} + S_{2}^{[1,1]})}
\E \Big\{ 
\e^{- \frac{1}{\tau}
\sum_{\nr=1}^{\NR}
(\vm{u}^{\nrbra}_{1})^{\trans}\vm{u}^{\nrbra}_{2}} 
\, \Big| \, \mathcal{X} \Big\}. \IEEEeqnarraynumspace
\label{eq:avg_over_U1U2_1}
\end{equation}
Using Lemma~\ref{lemma:mtx_int_2} 
and taking the limit $\tau \to 0^{+}$ leads to
\begin{IEEEeqnarray}{l}
\label{eq:Axi1_4}
\Xi^{(\NR)}_{\beta,\M} = \int \e^{-\M
G^{(\NR)}}\prod_{\NR=1}^{\NR}
\e^{-\beta(\|\vm{x}^{\nrbra}_{1}\|_{1}
+\|\vm{x}^{\nrbra}_{2}\|_{1})}
\mathrm{d} \vm{x}^{\nrbra}_{1} \mathrm{d} \vm{x}^{\nrbra}_{2},
\IEEEeqnarraynumspace
\end{IEEEeqnarray}
where $G^{(\NR)} = \lim_{\tau\to 0^{+}}G^{(\NR)}_{\tau}$.  The 
function $G^{(\NR)}_{\tau}$ given in \eqref{eq:funcGu_simplified}
at the top of the next page
is implicitly a function of both $\vm{S}_{1}$ and $\vm{S}_{2}$.
To obtain \eqref{eq:funcGu_simplified}
we first used \eqref{eq:lemma_2}, then applied 
\eqref{eq:approx_IM}. 
Finally, some
algebraic manipulations give the reported result.
\begin{figure*}[!t]
\normalsize
\begin{IEEEeqnarray}{rCl}
G^{(\NR)}_{\tau}
&=& \frac{1}{2\tau}
  \Big(\sqrt{S_{1}^{[1,1]}-S_{1}^{[1,2]}+ \NR S_{1}^{[1,2]}}
  - \sqrt{S_{2}^{[1,1]}-S_{2}^{[1,2]}+ \NR S_{2}^{[1,2]}} \Big)^{2}
   + \, \frac{\NR-1}{2\tau}
  \Big(\sqrt{S_{1}^{[1,1]}-S_{1}^{[1,2]}}
  - \sqrt{S_{2}^{[1,1]}-S_{2}^{[1,2]}} \Big)^{2}	\IEEEeqnarraynumspace
\IEEEnonumber\\
  && + \, \frac{1}{4} \log\Big[
  \big(S_{1}^{[1,1]}-S_{1}^{[1,2]}+ \NR S_{1}^{[1,2]}\big)
  \big(S_{2}^{[1,1]}-S_{2}^{[1,2]}+ \NR S_{2}^{[1,2]}\big)
  \Big]	
   + \, \frac{\NR-1}{4} \log\Big[
  \big(S_{1}^{[1,1]}-S_{1}^{[1,2]}\big)
  \big(S_{2}^{[1,1]}-S_{2}^{[1,2]}\big)
  \Big],
  \IEEEeqnarraynumspace 
  \label{eq:funcGu_simplified}
\vspace*{0em}
\end{IEEEeqnarray}
\hrulefill
\vspace*{-0.5em}
\end{figure*}

The problem with the limit 
$G^{(\NR)} = \lim_{\tau\to 0^{+}}G^{(\NR)}_{\tau}$ is 
that it diverges and the free energy density grows
without bound which is an undesired result.
To keep $G^{(\NR)}$ and the free energy density
finite as $\tau\to 0^{+}$, we
pose the constraints 
\begin{IEEEeqnarray}{rCl}
  \label{eq:constr_1}
  S_{1}^{[1,1]}-S_{1}^{[1,2]}+ \NR S_{1}^{[1,2]} &=&
  S_{2}^{[1,1]}-S_{2}^{[1,2]}+ \NR S_{2}^{[1,2]},
  \IEEEeqnarraynumspace\\
  S_{1}^{[1,1]}-S_{1}^{[1,2]} &=&
  S_{2}^{[1,1]}-S_{2}^{[1,2]},
  \IEEEeqnarraynumspace
  \label{eq:constr_2}
\end{IEEEeqnarray}
on the elements of the replica symmetric matrices 
$\vm{S}_{1}, \vm{S}_{2}$.
Given \eqref{eq:constr_1} and 
\eqref{eq:constr_2} are satisfied, we get 
in the limit $\tau \to 0^{+}$ 
the expression for 
$G^{(\NR)} = G^{(\NR)}_{1} + G^{(\NR)}_{2}$ in terms of
\begin{IEEEeqnarray}{rCl}
  \label{eq:funcGuK_1}
  G^{(\NR)}_{i} &=&
  \frac{1}{4} \log
  \big(Q_{i}	- q_{i} + \NR (r_{i} - 2 m_{i} + q_{i}) \big) 
   \IEEEeqnarraynumspace \IEEEnonumber\\
  && \; + \frac{\NR-1}{4} \log
  (Q_{i} - q_{i}), \qquad i = 1,2. 	 \IEEEeqnarraynumspace
\end{IEEEeqnarray}
Comparing \eqref{eq:funcGuK_1} to 
\cite[eq.~(A.4)]{Kabashima-Wadayama-Tanaka-2009} reveals  
that the corresponding terms for rotationally invariant and bi-orthogonal 
$\vm{D}$ match up to vanishing constants.
Furthermore, in the limit $\NR\to 0$ the equalities
\eqref{eq:constr_1} and \eqref{eq:constr_2} 
are equivalent to the condition $\chi_{1} = \chi_{2},$ where we denoted
$\chi_{i} = \beta (Q_{i}-q_{i})$ for notational convenience.
This provides the relevant 
constraint for the evaluation of the RS free energy, as stated 
in Section~\ref{sec:freeE}.  

The next task would be to average \eqref{eq:Axi1_4} 
over the correlations \eqref{eq:symbolQab} using the 
theory of large 
deviations and saddle-point integration.
But since the effect of the
bi-orthogonal sensing matrix $\vm{D}$ has been reduced to the 
above constraint,
we omit the calculations here due to space constraints. 
For details, see \cite[Appendix~A]{Kabashima-Wadayama-Tanaka-2009}
and \cite{Lortho-physics}.

\section{Matrix Integration}
\label{app:mtx-integrals}

\begin{lemma}
  \label{lemma:mtx_int_1}
Let $\vm{O}_{1}$ and $\vm{O}_{2}$ be independent and drawn
uniformly according to the Haar measure
on the group of all orthogonal $\M\times\M$ matrices
as in  \eqref{eq:sysmodel_w_A_2}.
Given vectors $\vm{x}_{1},\vm{x}_{2}\in\R^{\M},$ 
denote $\|\vm{x}_{i}\|^{2} = \M r_{i}$, for $i = 1,2.$
Then 
\begin{equation}
I_{\M} (r_{1},r_{2};c)
=\E_{\vm{O}_{1},\vm{O}_{2}}
\e^{c \vm{x}_{1}^{\trans}\vm{O}_{1}^{\trans}\vm{x}_{2}\vm{O}_{2}}
=\E_{\vm{u}_{1}, \vm{u}_{2}}\e^{c\vm{u}_{1}^{\trans}\vm{u}_{2}},
\label{eq:int_largeN_U1U2_1}
\end{equation}
where $c\in R$ and vectors $\vm{u}_{1},\vm{u}_{2}\in\R^{\M}$ are 
independent and  uniformly distributed on the 
hyper-spheres at the boundaries of $\M$ dimensional balls
with radiuses $R_{1}=\sqrt{\M r_{1}}$ and
$R_{2}=\sqrt{\M r_{2}}$, respectively.
Furthermore, 
\begin{IEEEeqnarray}{l}
\label{eq:Fxy}
F (r_{1},r_{2};c) =
\lim_{\M\to\infty} \M^{-1}\log I_{\M} (r_{1},r_{2};c) \IEEEnonumber\\
\; = \frac{\sqrt{1+4 c^{2} r_{1} r_{2}}}{2}
- \frac{1}{2} \log \left( \frac{1+ \sqrt{1+4 c^{2} r_{1} r_{2}}}{2} \right)
- \frac{1}{2}
\label{eq:IM_exact}
\IEEEeqnarraynumspace \\ 
\;\approx \sqrt{c^{2}r_{1} r_{2}} - \log (c^{2} r_{1} r_{2})/4,
\quad \text{for $c^{2}r_{1} r_{2} \gg 1$}. \IEEEeqnarraynumspace
\label{eq:approx_IM}
\end{IEEEeqnarray}

\end{lemma}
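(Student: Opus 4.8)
The plan is to establish the three assertions of the lemma in sequence: the distributional identity \eqref{eq:int_largeN_U1U2_1}, the closed form \eqref{eq:IM_exact}, and its large-argument approximation \eqref{eq:approx_IM}. First I would prove \eqref{eq:int_largeN_U1U2_1} by invoking the invariance of the Haar measure. For any fixed $\vm{x}_i$, the image $\vm{O}_i \vm{x}_i$ under a Haar-distributed $\vm{O}_i$ is uniformly distributed on the sphere of radius $\|\vm{x}_i\| = \sqrt{\M r_i}$; since $\vm{O}_1$ and $\vm{O}_2$ are independent, the vectors $\vm{u}_i = \vm{O}_i \vm{x}_i$ are jointly independent with exactly the uniform laws claimed. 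Substituting this change of variables into the matrix integral reproduces the right-hand side at once.

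Next I would reduce the sphere average to a one-dimensional integral. Conditioning on $\vm{u}_1$ and using rotational invariance of the law of $\vm{u}_2$, the expectation depends only on $R_1 = \|\vm{u}_1\|$, so I may take $\vm{u}_1 = R_1 \vm{e}_1$ and retain only the first coordinate of $\vm{u}_2$. Writing $w$ for that coordinate normalised by $R_2$, whose marginal density on $[-1,1]$ is proportional to $(1-w^2)^{(\M-3)/2}$, this yields
\[
I_{\M} = \frac{\int_{-1}^{1} \e^{c R_1 R_2 w}\,(1-w^2)^{(\M-3)/2}\,\mathrm{d}w}{\int_{-1}^{1}(1-w^2)^{(\M-3)/2}\,\mathrm{d}w},
\]
where $c R_1 R_2 = \M \kappa$ with $\kappa = c\sqrt{r_1 r_2}$, so that $\kappa^2 = c^2 r_1 r_2$. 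Equivalently $I_{\M}$ is a ratio of modified Bessel functions, which is an alternative exact route to the same asymptotics.

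Then I would extract the rate $F$ by Laplace's method applied to $\e^{\M \Phi(w)}$ with $\Phi(w) = \kappa w + \tfrac{1}{2}\log(1-w^2)$, the $O(1)$ powers of $(1-w^2)$ being immaterial to $\M^{-1}\log$. The denominator is dominated by $w = 0$ and contributes rate $0$, so $F = \max_{w \in (-1,1)} \Phi(w)$. The stationarity condition $\Phi'(w) = 0$ reads $\kappa(1-w^2) = w$, whose admissible root is $w^{\star} = (s-1)/(2\kappa)$ with $s = \sqrt{1+4 c^2 r_1 r_2}$. Using the identity $1 - (w^{\star})^2 = w^{\star}/\kappa = 2/(1+s)$, back-substitution gives $F = (s-1)/2 - \tfrac{1}{2}\log\tfrac{1+s}{2}$, which is precisely \eqref{eq:IM_exact}. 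The approximation \eqref{eq:approx_IM} then follows by letting $s \to 2\sqrt{c^2 r_1 r_2}$ when $c^2 r_1 r_2 \gg 1$, giving $F \approx \sqrt{c^2 r_1 r_2} - \tfrac{1}{4}\log(c^2 r_1 r_2)$.

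The main obstacle I anticipate lies in the asymptotic step rather than the algebra: one must verify that the maximiser $w^{\star}$ is interior and unique, that the subexponential prefactors and the normalising denominator genuinely drop out of $\M^{-1}\log I_{\M}$, and that the additive constants of the closed form (the $-\tfrac{1}{2}$ and the $\log((1+s)/2)$ term) survive the back-substitution intact. Controlling the sign of $c$ is a minor bookkeeping matter, since for $\kappa < 0$ the maximiser merely moves to negative $w$ while $F$ remains even in $\kappa$, consistent with the dependence on $c^2 r_1 r_2$ in \eqref{eq:IM_exact}.
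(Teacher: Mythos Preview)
Your argument is correct and reaches the same closed form, but the route you take for the asymptotic evaluation is genuinely different from the paper's. You exploit rotational invariance to collapse the sphere average to a single angular variable $w$, obtaining the one--dimensional Laplace integral with integrand $\exp\{\M\Phi(w)\}$, $\Phi(w)=\kappa w+\tfrac12\log(1-w^2)$, and then solve the scalar stationarity condition. The paper instead keeps both $\vm{u}_1$ and $\vm{u}_2$ as integration variables, enforces each spherical constraint through a Laplace--transform representation of the delta function, performs the resulting Gaussian integral over $(\vm{u}_1,\vm{u}_2)$, and is left with a two--variable saddle--point problem in the conjugate parameters $(s_1,s_2)$, namely $\mathrm{extr}_{s_1,s_2}\{s_1 r_1+s_2 r_2-\log(s_1 s_2-c^2)\}$. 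Your reduction is more elementary and makes the dependence on the single combination $c^2 r_1 r_2$ manifest from the outset; the paper's Fourier/Laplace--plus--Gaussian machinery is heavier here but is the natural template for the correlated, multi--replica situation treated in the companion lemma, where a clean one--dimensional reduction is no longer available. The first and third parts of your proof (the Haar argument for \eqref{eq:int_largeN_U1U2_1} and the large--$c^2 r_1 r_2$ expansion) coincide with the paper's.
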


\begin{proof}
  Let $\vm{u}_{i} = \vm{O}_{i}\vm{x}_{i}$ where $\{\vm{x}_{i}\}_{i=1}^{2}$
  are fixed and
  $\{\vm{O}_{i}\}_{i=1}^{2}$ independent and drawn
  uniformly according to the Haar measure
  on the group of all orthogonal $\M\times\M$ matrices.  
  Since $\|\vm{u}_{i}\|^{2} = \M r_{i}$
  and $\vm{O}_{i}$ rotate the vectors $\vm{u}_{i}$
  uniformly in all directions, $\vm{u}_{i}$ is uniformly distributed 
  on the hyper-sphere at the boundaries of an $\M$ dimensional ball
  having radius $R_{i}=\sqrt{\M r_{i}}$, providing the second equality in
  \eqref{eq:int_largeN_U1U2_1}.
  
  To assess the second part of the lemma, the joint measure of 
  $(\vm{u}_{1},\vm{u}_{2})$ reads
  $p(\vm{u}_{1}; r_{1})p(\vm{u}_{2}; r_{2})
  \mathrm{d}\vm{u}_{1}\mathrm{d}\vm{u}_{2}$,
  where
    \begin{equation}
    \label{eq:pdf_u}
    p(\vm{u}; r) = Z(r)^{-1} \delta(\|\vm{u}\|^{2}-\M).
    \IEEEeqnarraynumspace
    \end{equation}
  The normalization constant
  $Z(r)$ in  \eqref{eq:pdf_u} is the volume of the hypersphere 
  in which $\vm{u}$ is constrained to.  Using 
  Stirling's formula
  for large $\M$, we get up to a vanishing term $O (1/\M)$ 
  \begin{IEEEeqnarray}{rCl}
    Z(r)
    = (2 \pi \e r)^{\M/2} / \sqrt{\pi r}.
    \IEEEeqnarraynumspace
  \end{IEEEeqnarray}
  With the help of Laplace transform, we write
  \begin{equation}
  \delta(x-a)
  = \frac{1}{4 \pi \im} \int_{\gamma-\im\infty}^{\gamma+\im\infty}
  \e^{-\frac{1}{2}s(x-a)} \mathrm{d} s, \qquad \gamma\in\R,
  \label{eq:delta_L_transform_1}
  \end{equation}
  so that using \eqref{eq:pdf_u}~--~\eqref{eq:delta_L_transform_1},
  the latter expectation in \eqref{eq:int_largeN_U1U2_1} 
  becomes
  \begin{IEEEeqnarray}{l}
    \frac{(4 \pi \im)^{-2}}{Z(r_{1}, r_{2})}
    \int \e^{
    c \vm{u}_{1}^{\trans}\vm{u}_{2}
    -\sum_{i=1}^{2}(\|\vm{u}_{i}\|^{2}-\M r_{i})s_{i}/2}
	\prod_{i=1}^{2}
    \mathrm{d} \vm{u}_{i} 
    \mathrm{d} s_{i}
    \IEEEeqnarraynumspace\IEEEnonumber\\ [-0.5ex]
    \quad =
    \frac{(4\im)^{-2}
    \sqrt{r_{1} r_{2}}
    }{\pi\e^{\M}
        (r_{1} r_{2})^{\M/2}}
    \int
  	\frac{\e^{\M \frac{s_{1}r_{1}+s_{2}r_{2}}{2}}}{(s_{1} s_{2} - c^{2})^{\M/2}}
	\mathrm{d} s_{1} \mathrm{d} s_{2}, \IEEEeqnarraynumspace
    \label{eq:int_largeN_U1U2_3}
  \end{IEEEeqnarray}
  where we used Gaussian integration to obtain \eqref{eq:int_largeN_U1U2_3}.
  Since $\M\to\infty$, we next apply saddle-point integration to solve
  the integrals w.r.t $s_{1}$ and $s_{2}$.  After canceling the vanishing 
  terms, 
  \begin{IEEEeqnarray}{l}
  \label{eq:IM_x1x2_1}
  \lim_{\M\to\infty} \M^{-1}\log I_{\M} (r_{1},r_{2}; c)
  \IEEEnonumber\\ [-1ex]
  = -1 
  - \frac{1}{2}\sum_{i=1}^{2} \log r_{i} 
  + \frac{1}{2} \, \underset{s_{1},s_{2}}{\mathrm{extr}}\:
  \bigg\{
  \sum_{i=1}^{2}s_{i}r_{i}
  - \log(s_{1} s_{2} - c^{2})
  \bigg\}, \IEEEnonumber \\ [-0.5ex]
\vspace*{-1ex}
  \end{IEEEeqnarray}
  and \eqref{eq:IM_exact} follows by solving 
  the extremization, and \eqref{eq:approx_IM}  by 
  neglecting the terms that are of the order unity.
\end{proof}

\begin{lemma}
  \label{lemma:mtx_int_2}
Let $\{\vm{O}_{i}\}_{i=1}^{2}$ be as in Lemma~\ref{lemma:mtx_int_1}, and
$\Delta\vm{x}_{i}^{\nrbra}$
for $i=1,2$ and $\nr =1,\ldots,\NR$ 
as in \eqref{eq:Axi1_3}.  Then, under RS ansatz
\begin{IEEEeqnarray}{l}
\lim_{\M\to\infty} M^{-1}
\log \,
\E_{\vm{O}_{1},\vm{O}_{2}} \big\{
\e^{c\sum_{\nr=1}^{\NR}
(\vm{O}_{1} \Delta\vm{x}_{1}^{\nrbra})^{\trans}
(\vm{O}_{2} \Delta\vm{x}_{2}^{\nrbra})} \, \big| \, \mathcal{X}
\big\} \IEEEeqnarraynumspace \IEEEnonumber \\ [1ex]
 =
  F\big(
  S_{1}^{[1,1]}-S_{1}^{[1,2]}+ \NR S_{1}^{[1,2]},
  S_{2}^{[1,1]}-S_{2}^{[1,2]}+ \NR S_{2}^{[1,2]}; c\big)
  \IEEEeqnarraynumspace \IEEEnonumber\\
   \qquad + (\NR-1)
  F\big(
  S_{1}^{[1,1]}-S_{1}^{[1,2]},
  S_{2}^{[1,1]}-S_{2}^{[1,2]}; c\big),
  \IEEEeqnarraynumspace
  \label{eq:lemma_2}
\end{IEEEeqnarray}
where $c\in R$ and $F(r_{1}, r_{2}; c)$ is given in \eqref{eq:Fxy}.
\end{lemma}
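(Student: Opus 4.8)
The plan is to generalize the proof of Lemma~\ref{lemma:mtx_int_1} from a single pair of vectors to the $\NR$-replica ensemble, and to show that under the RS ansatz the rate separates additively over the eigenmodes of $\vm{S}_1$ and $\vm{S}_2$. First I would collect the rotated replicas into matrices $\vm{U}_i = [\vm{O}_i\Delta\vm{x}_i^{[1]} \cdots \vm{O}_i\Delta\vm{x}_i^{[\NR]}]\in\R^{\M\times\NR}$, so that $\sum_{\nr=1}^{\NR}(\vm{O}_1\Delta\vm{x}_1^{\nrbra})^{\trans}(\vm{O}_2\Delta\vm{x}_2^{\nrbra}) = \tr[\vm{U}_1^{\trans}\vm{U}_2]$. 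Exactly as in Lemma~\ref{lemma:mtx_int_1}, rotational invariance of the Haar measure lets me replace each $\vm{U}_i$ by a uniformly random frame with the prescribed Gram matrix $\M^{-1}\vm{U}_i^{\trans}\vm{U}_i = \vm{S}_i$, the two frames being independent. The object to evaluate is thus $\lim_{\M\to\infty}\M^{-1}\log\E\,\e^{c\,\tr[\vm{U}_1^{\trans}\vm{U}_2]}$ over these two frame ensembles.

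Next I would diagonalise in replica space. Under the RS structure \eqref{eq:Smtx_i_RS} both $\vm{S}_1$ and $\vm{S}_2$ are simultaneously diagonalised by a single $\NR\times\NR$ orthogonal matrix $\vm{T}$ (e.g.\ the Helmert matrix with first column $\vm{1}_{\NR}/\sqrt{\NR}$), with a nondegenerate ``longitudinal'' eigenvalue $S_i^{[1,1]}-S_i^{[1,2]}+\NR S_i^{[1,2]}$ and a $(\NR-1)$-fold degenerate ``transverse'' eigenvalue $S_i^{[1,1]}-S_i^{[1,2]}$; these are precisely the two arguments of $F$ on the right-hand side of \eqref{eq:lemma_2}. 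Because $\tr[\vm{U}_1^{\trans}\vm{U}_2] = \tr[(\vm{U}_1\vm{T})^{\trans}(\vm{U}_2\vm{T})]$ is invariant under the common replica rotation $\vm{U}_i\mapsto\vm{U}_i\vm{T}$, I may pass to the decorrelated modes $\tilde{\vm{u}}_i^{(k)}$, which within each block are mutually orthogonal in $\R^{\M}$ with squared norms equal to $\M$ times the corresponding eigenvalue, and rewrite the exponent as $c\sum_{k=1}^{\NR}(\tilde{\vm{u}}_1^{(k)})^{\trans}\tilde{\vm{u}}_2^{(k)}$.

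I would then compute the resulting frame integral by repeating the saddle-point argument of Lemma~\ref{lemma:mtx_int_1} mode by mode. Imposing the full set of Gram constraints on each frame through the Laplace representation \eqref{eq:delta_L_transform_1} introduces a symmetric $\NR\times\NR$ matrix of conjugate variables per block; performing the Gaussian integral over the $2\NR\M$ frame components---coupled only through the mode-diagonal term $c\sum_k(\tilde{\vm{u}}_1^{(k)})^{\trans}\tilde{\vm{u}}_2^{(k)}$---produces, after the saddle-point as in \eqref{eq:int_largeN_U1U2_3}--\eqref{eq:IM_x1x2_1}, a log-determinant of a $2\NR\times2\NR$ kernel. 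At the symmetric saddle the conjugate matrices are diagonal, so this kernel is block diagonal into $\NR$ independent $2\times2$ blocks, each identical to the $\NR=1$ integrand of Lemma~\ref{lemma:mtx_int_1} with radii given by the two eigenvalues. The extremal value therefore splits over the modes, yielding one longitudinal term and $(\NR-1)$ equal transverse terms, which is exactly \eqref{eq:lemma_2}.

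The main obstacle is justifying this decoupling, i.e.\ that the off-diagonal conjugate variables vanish at the dominant saddle so that the modes contribute additively to the rate. Since all $\NR$ replicas are rotated by the \emph{same} $\vm{O}_i$, the mode overlaps are a priori correlated, and one must verify that the diagonal (replica-symmetric) saddle dominates rather than some configuration with nonzero cross-mode multipliers. This is exactly where the RS ansatz is essential: the simultaneous diagonalisability of $\vm{S}_1$ and $\vm{S}_2$ makes the symmetric saddle consistent, and a check of the extremum (convexity in the conjugate variables) confirms that the rotated modes become asymptotically independent at the level of the large-deviation rate. The remaining manipulations that turn the per-mode extremum into the closed form \eqref{eq:IM_exact} are identical to those in Lemma~\ref{lemma:mtx_int_1} and need not be repeated.
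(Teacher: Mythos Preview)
Your proposal is correct and follows essentially the same approach as the paper: both arguments diagonalise $\vm{S}_1,\vm{S}_2$ simultaneously via the common RS eigenbasis (your $\vm{T}$ is the paper's $\vm{E}$), pass to orthogonal modes with the stated eigenvalue norms, argue the modes decouple asymptotically, and apply Lemma~\ref{lemma:mtx_int_1} mode by mode. The only difference is emphasis: where you justify the decoupling by a full conjugate-matrix saddle-point argument, the paper simply invokes the heuristic that for $\NR\ll\M$ orthogonal modes in $\R^{\M}$ may be averaged independently.
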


\begin{proof}
Denote
$\vm{u}^{\nrbra}_{i} = \vm{O}_{i} \Delta\vm{x}_{i}^{\nrbra}$
for all $i = 1,2$ and $\nr = 1,\ldots,\NR$.
Given $\mathcal{X}$, $\vm{u}^{\nrbra}_{i}$ lie on the surfaces of 
hyper-spheres as in the proof of Lemma~\ref{lemma:mtx_int_1}.
The RS ansatz guarantees that $\vm{u}_i^{[a]}$ can be 
expressed as $[\vm{u}_i^{[1]} \ \vm{u}_i^{[2]} \ \cdots \ \vm{u}_i^{[u]}]
=[\tilde{\vm{u}}_i^{[1]} \ \tilde{\vm{u}}_i^{[2]} \ \cdots \ \tilde{\vm{u}}_i^{[u]}]
\vm{E}^{\trans}$, where $\{\tilde{\vm{u}}_i^{[a]}\}$ is a set of  vectors
that satisfies $M^{-1} \tilde{\vm{u}}_i^{[a]} \cdot \tilde{\vm{u}}_i^{[b]}=0$ if $a\ne b$ and 
\begin{equation} 
\frac{1}{\M} \tilde{\vm{u}}_i^{[a]} \cdot \tilde{\vm{u}}_i^{[b]}
=\left \{
\begin{array}{ll}
uS_i^{[1,2]}+(S_i^{[1,1]}-S_i^{[1,2]} ) & \mbox{if $a=b=1$}; \cr
S_i^{[1,1]}-S_i^{[1,2]} &  \mbox{if $a=b\ge 2$}. 
\end{array}
\right . 
\label{eq:app-vecW-corrs}
\end{equation}
The matrix $\vm{E}=[u^{-1/2} \vm{1}_u \ \vm{e}_2 \ \cdots \ \vm{e}_u ]$ 
provides an orthonormal 
basis that is independent of index $i$.  
This indicates that the expectation in (45) can be assessed w.r.t. $\{\tilde{\vm{u}}_i^{[a]}\}$ 
instead of the original non-orthogonal set $\{\vm{u}_i^{[a]}\}$.  
The orthogonality allows us to independently evaluate the expectation for 
each replica index $a$ when $u \ll M$. 
Using Lemma 1 and \eqref{eq:app-vecW-corrs} completes the proof. 
\end{proof}

\section*{Acknowledgment}
This work was supported in part by the 
Swedish Research Council under VR Grant 621-2011-1024
and by grants from
the JSPS (KAKENHI Nos. 22300003 and 22300098).

\bibliography{./biblio_saikat_CS}

\begin{thebibliography}{10}
\providecommand{\url}[1]{#1}
\csname url@samestyle\endcsname
\providecommand{\newblock}{\relax}
\providecommand{\bibinfo}[2]{#2}
\providecommand{\BIBentrySTDinterwordspacing}{\spaceskip=0pt\relax}
\providecommand{\BIBentryALTinterwordstretchfactor}{4}
\providecommand{\BIBentryALTinterwordspacing}{\spaceskip=\fontdimen2\font plus
\BIBentryALTinterwordstretchfactor\fontdimen3\font minus
  \fontdimen4\font\relax}
\providecommand{\BIBforeignlanguage}[2]{{%
\expandafter\ifx\csname l@#1\endcsname\relax
\typeout{** WARNING: IEEEtran.bst: No hyphenation pattern has been}%
\typeout{** loaded for the language `#1'. Using the pattern for}%
\typeout{** the default language instead.}%
\else
\language=\csname l@#1\endcsname
\fi
#2}}
\providecommand{\BIBdecl}{\relax}
\BIBdecl

\bibitem{Fletcher_ISIT_2009}
A.~Fletcher, V.~Goyal, and S.~Rangan, ``A sparsity detection framework for
  on-off random access channels,'' in \emph{Proc. IEEE Int. Symp. Inform.
  Theory}, Jun. 28 -- Jul. 3 2009.

\bibitem{Gastpar_Sastry_2010_Distributed_sensor_perception}
A.~Y. Yang, M.~Gastpar, R.~Bajcsy, and S.~S. Sastry, ``Distributed sensor
  perception via sparse representation,'' \emph{Proc. IEEE}, vol.~98, no.~6,
  pp. 1077--1088, Jun. 2010.

\bibitem{Elad_book_2010}
M.~Elad, \emph{Sparse and redundant representations: From theory to
  applications in signal and image processing}.\hskip 1em plus 0.5em minus
  0.4em\relax Springer, 2010.

\bibitem{Donoho_2006_Compressed_sensing}
D.~L. Donoho, ``Compressed sensing,'' \emph{IEEE Trans. Inform. Theory},
  vol.~52, no.~4, pp. 1289--1306, Apr. 2006.

\bibitem{CS_introduction_Candes_Wakin_2008}
E.~J. Candes and M.~B. Wakin, ``An introduction to compressive sampling,''
  \emph{IEEE Signal Proc. Magazine}, vol.~25, pp. 21--30, Mar. 2008.

\bibitem{Donoho-Maleki-Montanari-2009}
D.~L. Donoho, A.~Maleki, and A.~Montanari, ``Message passing algorithms for
  compressed sensing,'' \emph{Proc. Nat. Acad. Sci.}, vol. 106, pp.
  18\,914--18\,919, 2009.

\bibitem{Kabashima-Wadayama-Tanaka-2009}
Y.~Kabashima, T.~Wadayama, and T.~Tanaka, ``A typical reconstruction limit for
  compressed sensing based on $l_{p}$ -norm minimization,'' \emph{J. Stat.
  Mech.}, vol. 2009, no.~9, pp. L09\,003--1 -- L09\,003--12, 2009.

\bibitem{Takeda-Kabashima-isit2010}
K.~Takeda and Y.~Kabashima, ``Statistical mechanical analysis of compressed
  sensing utilizing correlated compression matrix,'' in \emph{Proc. IEEE Int.
  Symp. Inform. Theory}, Jun. 13--18 2010, pp. 1538--1542.

\bibitem{Tulino-etal-isit2011}
A.~Tulino, G.~Caire, S.~Shamai, and S.~Verd{\'u}, ``Support recovery with
  sparsely sampled free random matrices,'' in \emph{Proc. IEEE Int. Symp.
  Inform. Theory}, Jul. 31 - Aug. 5 2011, pp. 2328--2332.

\bibitem{Huo_Theis_1999}
X.~Huo, ``Sparse image representation via combined transforms,'' Ph.D.
  dissertation, Stanford, 1999.

\bibitem{Donoho_Huo_2001}
D.~Donoho and X.~Huo, ``Uncertainty principles and ideal atomic
  decomposition,'' \emph{IEEE Trans. Inform. Theory}, vol.~47, no.~7, pp.
  2845--2862, Nov. 2001.

\bibitem{Elad_Bruckstein_2002}
M.~Elad and A.~Bruckstein, ``A generalized uncertainty principle and sparse
  representation in pairs of bases,'' \emph{IEEE Trans. Inform. Theory},
  vol.~48, no.~9, pp. 2558--2567, Sep. 2002.

\bibitem{Donoho_Elad_2003}
D.~Donoho and M.~Elad, ``Optimally sparse representation in general
  (non-orthogonal) dictionaries via $l_1$ minimization,'' \emph{Proc. Nat.
  Acad. Sci.}, vol. 100, no.~5, pp. 2197--2202, Nov. 2003.

\bibitem{Rubinstein_Elad_Dictionary_ProcIEEE_2010}
R.~Rubinstein, A.~M. Bruckstein, and M.~Elad, ``Dictionaries for sparse
  representation modeling,'' \emph{Proc. IEEE}, vol.~98, no.~6, pp. 1045--1057,
  Jun. 2010.

\bibitem{Tanaka_Raymond_2010}
T.~Tanaka and J.~Raymond, ``Optimal incorporation of sparsity information by
  weighted $l_1$ optimization,'' in \emph{Proc. IEEE Int. Symp. Inform.
  Theory}, Jun. 2010, pp. 1598--1602.

\bibitem{Guo-Baron-Shamai-2009}
D.~Guo, D.~Baron, and S.~Shamai, ``A single-letter characterization of optimal
  noisy compressed sensing,'' in \emph{Proc. Annual Allerton Conf. Commun.,
  Contr., Computing}, Sep. 30 - Oct. 2 2009, pp. 52--59.

\bibitem{Rangan-Fletcher-Goyal-2012}
S.~Rangan, A.~K. Fletcher, and V.~K. Goyal, ``Asymptotic analysis of {MAP}
  estimation via the replica method and applications to compressed sensing,''
  \emph{IEEE Trans. Inform. Theory}, vol.~58, no.~3, pp. 1902--1923, Mar. 2012.

\bibitem{Lortho-physics}
Y.~Kabashima, M.~Vehkaper{\"a}, and S.~Chatterjee, ``Typical $l_1$-recovery
  limit of sparse vectors that are represented by concatenations of random
  orthogonal matrices,'' \emph{in preparation}.

\end{thebibliography}
\bibliographystyle{IEEEtran}

\end{document}